\newtheorem{theorem}{Theorem}
\newtheorem{proposition}{Proposition}
\newtheorem{remark}{Remark}
\newtheorem{opt}{Optimization problem}
\newtheorem{highopt}{High-level optimization problem}
\def\BibTeX{{\rm B\kern-.05em{\sc i\kern-.025em b}\kern-.08em
    T\kern-.1667em\lower.7ex\hbox{E}\kern-.125emX}}
\newcommand{\bmat}[1]{\begin{bmatrix}#1\end{bmatrix}}
\newcommand\norm[1]{\left\lVert#1\right\rVert}
\newcolumntype{M}[1]{>{\centering\arraybackslash}m{#1}}
\newcommand{\He}[1]{{\color{black}  #1}}
\newcommand{\todo}[1]{{\color{red}  #1}}
\date{\vspace{-5ex}}
\begin{document}
\title{\LARGE \bf Backward Reachability for Polynomial Systems on A Finite Horizon}
\author{
	He Yin \thanks{He Yin is a Graduate Student in the Department of 
		Mechanical Engineering at the University of California, Berkeley 
		{\tt\small he\_yin@berkeley.edu}} , 
	Murat Arcak \thanks{Murat Arcak is a Professor in the Department of Electrical Engineering
		and Computer Sciences at the University of California, Berkeley
		{\tt\small arcak@berkeley.edu}} ,
	  Andrew Packard \thanks{Andrew Packard is a Professor in the Department of
		Mechanical Engineering at the University of California, Berkeley
		{\tt\small apackard@berkeley.edu}} ,
	  Peter Seiler \thanks{Peter Seiler is an Associate Professor in the Department of Aerospace Engineering and Mechanics at the University of
	 	Minnesota
	 	{\tt\small seile017@umn.edu}}%
}

\maketitle

\begin{abstract}
A method is presented to obtain an inner-approximation of the backward reachable set (BRS) \He{of a given target tube}, along with an \He{admissible} controller that \He{maintains trajectories inside this tube}. The proposed optimization algorithms are formulated as nonlinear optimization problems, which are decoupled into tractable subproblems and solved by an iterative algorithm using the \He{polynomial} S-procedure and sum-of-squares techniques. This framework is also extended to uncertain nonlinear systems with $\mathcal{L}_2$ disturbances and $\mathcal{L}_{\infty}$ parametric uncertainties. The effectiveness of the method is demonstrated on several nonlinear robotics and aircraft systems with control saturation.
\end{abstract}

\section{Introduction}
\label{sec:introduction}

The backward reachable set (BRS) is the set of all initial conditions \He{whose successors} can be \He{maintained} safely inside a \He{given time-varying state constraint set (``target tube'')} using an admissible controller \He{while satisfying control constraints}. \He{The BRS and the accompanying controller are of great importance for safety-critical systems.} \He{In this paper, we address the computation of an inner-approximation to the BRS and construction of an explicit feedback control action (as a state-feedback) on a finite-time horizon. We focus on problems with \emph{finite-time horizons}, since in many practical settings, systems only undergo finite-time trajectories, such as robotic systems and space launch / re-entry vehicles. }

 Lyapunov-based methods for the finite-horizon BRS computation are pursued in \cite{Majumdar:13}, where reference tracking controllers are designed to maximize the size of the BRS for error states, and in \cite{Majumdar:17}, where the goal is to compute a reference tracking controller by minimizing the size of an invariant funnel of the tracking error. \He{The computational approach put forth in \cite{Majumdar:13} and \cite{Majumdar:17} involves gridding in time, with S-procedure and sum-of-squares (SOS) techniques handling the state-space containments.  In \cite{Ian:05}, gridding is used in both space and time.}
 
 \He{A related computation that does not rely on gridding is considered in \cite{Henrion:14} and \cite{Majumdar:14}, where the BRS is outer-approximated by taking the complement of the initial set from which no trajectory	is able to reach the target set for any admissible inputs. This yields an infinite-dimensional linear program, and a sequence of finite-dimensional convex problems, along with results that prove convergence (from outside) to the true BRS, as more computational resources are employed.	Reference \cite{Henrion:14} proves that no suitable control action exists for initial	conditions outside the BRS outer-approximation. In contrast, \cite{Majumdar:14} modifies the formulation	and produces explicit control laws which will be suitable for some of the	points within the BRS outer-approximation. In addition, the obtained control laws will only approximately satisfy any given control constraints.}
 
 \He{The main contributions of the current paper are: (1) to explicitly synthesize a control law and an associated BRS \emph{inner-approximation},	(2) to accommodate various	sources of uncertainty simultaneously, including $\mathcal{L}_2$ disturbances  and $\mathcal{L}_{\infty}$ parametric uncertainties, (3) to present an iterative algorithm based on SDPs, with the guarantee that the certified inner-approximation to the BRS grows with each iteration.} \He{The results in this paper are complementary to those in \cite{Henrion:14}, \cite{Majumdar:14}, because we provide inner-approximations in which every point is guaranteed to lie in the BRS, as well as an explicit controller.} \He{By also avoiding} gridding of the time, state space or control space, \He{we provide a formal guarantee that the trajectories starting inside the inner-approximation remain inside the target tube.}
 
 \He{To enable these contributions,} the paper introduces a class of dissipation inequalities with associated “reachability storage functions”, whose sub-level sets characterize the inner-approximations to the BRS.   The polynomial S-procedure \cite{Parrilo:00} and SOS for polynomial non-negativity are used, \He{expressing the problem as a nonconvex optimization. The decision variables consist of a reachability storage function, a polynomial control law, and various S-procedure polynomial certificates. A tractable algorithm results, with further conservatism, by decoupling the original formulation into an iterative, two-way search between reachability storage functions and control laws, which are convex and quasiconvex problems, respectively.  \He{The use of dissipation inequalities also allows us to accommodate} various forms of disturbances and model uncertainty.}
 
 \He{Dissipation inequalities have also been applied to the}  related problem of region of attraction (ROA) estimation \He{which, however, is an infinite-time horizon problem. Associated with an equilibrium point, the ROA is the largest invariant set such that all trajectories starting inside converge to the equilibrium as $t \rightarrow \infty$. The literature on ROA estimation includes methods to search for a Lyapunov certificate for both stability and invariance \cite{Topcu:08} \cite{Weehong:08} \cite{Chesi:04} \cite{Fan:06} \cite{Andrea:19} and to synthesize a control law to expand the inner-approximation of the ROA \cite{Jarvis:05}.} 
 
 The conference version \cite{Yin:18} of this paper decomposes the control synthesis process into two steps: constructing storage functions first, and then computing control laws using the obtained storage functions through quadratic programs. The current paper presents a single-step design and accommodates control saturation, which is not addressed in \cite{Yin:18}. In addition, \cite{Yin:18} considers only a terminal target set, whereas this paper addresses a target tube. 
 In a separate publication \cite{He:18}, we have studied \emph{forward} reachable sets without control design.

\section{NOTATION}
$\mathbb{R}^{m\times n}$ and \He{$\mathbb{S}^{n\times n}$} denote the set of $m$-by-$n$ real matrices and $n$-by-$n$ real, symmetric matrices. $\mathbb{R}^m$ is the set of $m \times 1$ vectors whose elements are in $\mathbb{R}$. $\mathcal{C}^1$ is the set of differentiable functions with continuous derivative. $\mathcal{L}_2^m$ is the space of $\mathbb{R}^m$-valued measureable functions $f: [0, \infty) \rightarrow \mathbb{R}^m$, with $\norm{f}^2_2 := \int_0^{\infty} f(t)^Tf(t) dt < \infty$. Define $\norm{r}^2_{2,T} := \int_0^T r^T(t)r(t) dt.$ Associated with $\mathcal{L}_2^m$ is the extended space $\mathcal{L}_{2e}^m$, consisting of functions whose truncation $f_T(t) := f(t)$ for $t \le T$; $f_T(t) := 0$ for $t > T$, is in $\mathcal{L}_2^m$ for all $T > 0.$ For $\xi \in \mathbb{R}^n$, $\mathbb{R}[\xi]$ represents the set of polynomials in $\xi$ with real coefficients, and $\mathbb{R}^m[\xi]$ and $\mathbb{R}^{m \times p}[\xi]$ to denote all vector and matrix valued polynomial functions. The subset $\He{\Sigma[\xi] := \left\{\pi = \sum_{i=1}^M \pi_i^2 : M\ge 1, \pi_i \in \mathbb{R}[\xi]\right\}}$ of $\mathbb{R}[\xi]$ is the set of sum-of-squares (SOS) polynomials. For $\eta \in \mathbb{R}$, and continuous $r: \mathbb{R}^n \rightarrow \mathbb{R}$, $\Omega_{\eta}^r := \{x \in \mathbb{R}^n : r(x) \le \eta\}.$ For $\eta \in \mathbb{R}$, and continuous $r : \mathbb{R} \times \mathbb{R}^{n} \rightarrow \mathbb{R}$, define $\Omega_{t,\eta}^{r} := \{x \in \mathbb{R}^n : r(t,x) \le \eta\}$, a $t$-dependent set.

\He{In several places, a relationship between an algebraic condition on some real variables and input/output/state properties of a dynamical system is claimed. We use the same symbol for a particular real variable in the algebraic statement as well as the corresponding signal in the dynamical system. }

\section{Reachability Storage Functions and Control Synthesis}
Consider a time-varying, nonlinear system with affine dependence on the control input $u$:
\begin{align}
\dot{x}(t) = f(t,x(t)) + g(t,x(t))u(t), \label{eq:system1}
\end{align} 
with $x(t) \in \mathbb{R}^n$, $u(t) \in \mathbb{R}^m$, and mappings $f: \mathbb{R} \times \mathbb{R}^n \rightarrow \mathbb{R}^n$, $g: \mathbb{R} \times \mathbb{R}^n \rightarrow \mathbb{R}^{n \times m}$ continuous in $t$ and locally Lipschitz in $x$.

Denote $\phi(t; t_0, x_0, u)$ as the solution to the system (\ref{eq:system1}) at time $t \ (t_0 \leq t \leq T)$, from the initial condition $x_0$, under the control action $u(t)$. The function $r(t,x)$ is specified by the analyst, defining a target tube, $\Omega_{t,0}^r$. The target tube embodies time-varying state constraints, which are used to exclude unsafe regions, shape the trajectories $\phi$ and specify the desired set of states. \He{The BRS is defined as a set of states: $\{\xi \in \mathbb{R}^n : \exists u(\cdot), \ s.t. \ \phi(t; t_0, \xi, u) \in \Omega_{t,0}^r , \forall t \in [t_0, T]\}$.} 

In this paper, we consider an explicit time-varying, state-feedback control.
Let $k: \mathbb{R} \times \mathbb{R}^n \rightarrow \mathbb{R}^m$ define a memoryless, time-varying state feedback control by $u(t) = k(t,x(t))$. 

\He{An inner-approximation to the BRS is characterized by the level sets of ``reachability storage functions'' $V$ satisfying the conditions in the following proposition.}

\begin{proposition}\label{prop1}
	Given system (\ref{eq:system1}), initial time $t_0$, terminal time $T \ge t_0$, \He{a function $r$ and associated target tube $\Omega_{t,0}^{r}$}, and $\gamma \in \mathbb{R}$, if there exists a $\mathcal{C}^1$ function $V: \mathbb{R} \times \mathbb{R}^n \rightarrow \mathbb{R}$  and a control law $k: \mathbb{R} \times \mathbb{R}^n \rightarrow \mathbb{R}^m$ \He{that is continuous in $t$ and locally Lipschitz in $x$}, such that
	\begingroup
	\allowdisplaybreaks
	\begin{align}
	&\frac{\partial V}{\partial t} + \frac{\partial V}{\partial x}\left( f(t,x) + g(t,x)k(t,x) \right) \leq 0,  \forall (t, x) \in [t_0, T] \times \mathbb{R}^n, \text{and} \tag{A.1} \label{eq:A1} \\
	&\Omega^V_{t,\gamma} \subseteq \Omega_{t,0}^{r}, \ \text{for all} \ t \in [t_0, T], \tag{A.2} \label{eq:A2}
	\end{align}
	\endgroup
	then under the control law $k$, any trajectory of (\ref{eq:system1}) with initial condition $x(t_0) \in \Omega_{t_0, \gamma}^V$, satisfies $\phi(t; t_0, x(t_0), k) \in \Omega_{t, 0}^{r}$,  for all $t \in [t_0, T]$, i.e. all the trajectories remain inside the target tube. Such a function $V$ is called a {\it reachability storage function}.
\end{proposition}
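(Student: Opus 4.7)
The plan is a standard comparison argument based on the dissipation inequality in (\ref{eq:A1}). I would take an arbitrary initial condition $x_0 \in \Omega_{t_0,\gamma}^V$, close the loop with $u(t) = k(t,x(t))$, and track the evolution of $V$ along the resulting trajectory $\phi(t) := \phi(t;t_0,x_0,k)$.

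The first step is to note that, because $k$ is continuous in $t$ and locally Lipschitz in $x$, and $f,g$ are likewise, the closed-loop vector field $F(t,x) := f(t,x) + g(t,x)k(t,x)$ inherits those same regularity properties. Thus a unique $\mathcal{C}^1$ solution $\phi$ exists on some maximal forward interval $[t_0, T^\ast)$ with $T^\ast \le T$ (possibly equal to $T$). The second step is to define the scalar function $W(t) := V(t,\phi(t))$ on $[t_0, T^\ast)$. Since $V$ is $\mathcal{C}^1$, the chain rule gives
\begin{equation*}
\dot{W}(t) \;=\; \frac{\partial V}{\partial t}(t,\phi(t)) + \frac{\partial V}{\partial x}(t,\phi(t))\,F(t,\phi(t)),
\end{equation*}
and this is $\le 0$ by (\ref{eq:A1}) applied pointwise at $x = \phi(t)$. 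Hence $W$ is non-increasing, and since $W(t_0) = V(t_0,x_0) \le \gamma$, we conclude $V(t,\phi(t)) \le \gamma$ for every $t \in [t_0,T^\ast)$. Equivalently, $\phi(t) \in \Omega_{t,\gamma}^V$, and then (\ref{eq:A2}) immediately gives $\phi(t) \in \Omega_{t,0}^r$ on $[t_0,T^\ast)$.

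The remaining step, and the main obstacle, is upgrading $T^\ast$ to $T$, i.e.\ ruling out finite escape time before $T$. The inequality alone guarantees the trajectory stays in the sublevel set $\Omega_{t,\gamma}^V$, so one needs boundedness of these sets on $[t_0,T]$ (or equivalently of $\Omega_{t,0}^r$, given (\ref{eq:A2})) to invoke the standard continuation lemma: if $\phi$ remains in a bounded set on $[t_0,T^\ast)$, then $T^\ast = T$ and $\phi$ extends continuously. This boundedness is the usual implicit working hypothesis in target-tube/ROA arguments of this type; I would either state it as a standing assumption on $r$ (e.g.\ that $\Omega_{t,0}^r$ is bounded for each $t$, uniformly on $[t_0,T]$) or note that the inclusion already gives the desired conclusion on the interval of existence and that the assumption on $r$ secures global existence on $[t_0,T]$.

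Once $T^\ast = T$ is established, combining the two inclusions $\phi(t) \in \Omega_{t,\gamma}^V \subseteq \Omega_{t,0}^r$ for every $t \in [t_0,T]$ yields the claim, and $V$ is by definition a reachability storage function. No quantitative estimates are needed beyond the chain rule and monotonicity of $W$.
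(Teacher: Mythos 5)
Your argument is essentially the paper's own proof: integrate the dissipation inequality (\ref{eq:A1}) along the closed-loop trajectory to conclude $V(t,\phi(t)) \le V(t_0,x_0) \le \gamma$, then invoke (\ref{eq:A2}). The only substantive difference is that you explicitly flag the finite-escape-time issue, which the paper passes over silently by implicitly assuming trajectories exist on all of $[t_0,T]$; your observation that boundedness of the sublevel sets $\Omega_{t,\gamma}^V$ (or of the target tube) is what closes that gap is correct and a worthwhile refinement rather than a different route.
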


The set $\Omega_{t_0, \gamma}^V$ is an inner-approximation of the BRS for the given target tube and the initial time, associated with the control law $k$. For simplicity, we will use $x(t)$ to represent the state trajectories $\phi(t; t_0, x_0, k)$ in the rest of the paper. Proposition \ref{prop1} follows from a simple dissipation argument. Integrating constraint (\ref{eq:A1}) from $t_0$ to $t$ yields $V(t,x(t)) \leq V(t_0,x(t_0))$. Thus it follows from $x(t_0) \in \Omega_{t_0,\gamma}^V$ that $V(t,x(t)) \le \gamma$. Assumption (A.2) then implies that $x(t)$ stays in the target tube for all $t \in [t_0, T]$.

In some cases, the target tube might be defined only at the terminal time, i.e., the only constraint is $x(T) \in \Omega_{T,0}^r$, for all $x(t_0) \in \Omega_{t_0,\gamma}^V$. The set $\Omega_{T,0}^r$ is called the terminal target set, and it can be addressed by enforcing (\ref{eq:A2}) to hold only for $t = T$, which is equivalent to
\begin{align}
\Omega_{T,\gamma}^V \subseteq \Omega_{T,0}^r. \tag{A.3} \label{eq:A3}
\end{align}
Here $\Omega_{t_0,\gamma}^V$ is the inner-approximated BRS from the terminal target set $\Omega_{T,0}^r$. For simplicity, define $r_T(x) := r(T,x)$, and rewrite the terminal target set as $\Omega_{0}^{r_T}$.

\subsection{Local Synthesis}
\He{Constraint (\ref{eq:A1}) is conservative in that it holds throughout the state space, but the conclusion of Proposition \ref{prop1} only applies to a subset, namely $\Omega_{t,\gamma}^V$. By restricting where (\ref{eq:A1}) must hold, we obtain a less conservative local condition.}
 
\begin{theorem} \label{thm1}
	Given system (\ref{eq:system1}), initial time $t_0$, terminal time $T \ge t_0$, \He{a function $r$ and associated target tube $\Omega_{t,0}^{r}$}, and $\gamma \in \mathbb{R}$, if there exists a $\mathcal{C}^1$ function $V: \mathbb{R} \times \mathbb{R}^n \rightarrow \mathbb{R}$, and a control law $k: \mathbb{R} \times \mathbb{R}^n \rightarrow \mathbb{R}^m$ \He{that is continuous in $t$ and locally Lipschitz in $x$}, such that for all $t \in [t_0, T]$, the following two constraints hold, 
	\begin{align}
	&\Omega_{t,\gamma}^V \subseteq \left\{x\in \mathbb{R}^n \middle \vert\frac{\partial V}{\partial t} + \frac{\partial V}{\partial x} (f(t,x) + g(t,x)k(t,x)) \leq 0\right\}, \tag{B.1} \label{eq:B1} \\
	&\Omega^V_{t,\gamma} \subseteq \Omega_{t,0}^{r}, \tag{B.2} \label{eq:B2}
	\end{align}
	then under the control law $k$, any trajectory with initial condition $x(t_0) \in \Omega_{t_0, \gamma}^V$, satisfies $x(t) \in \Omega_{t,0}^{r}$ for all $t \in [t_0, T]$. 
\end{theorem}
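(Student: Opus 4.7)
My approach would mirror the dissipation proof of Proposition~\ref{prop1}: show that along any closed-loop trajectory starting in $\Omega_{t_0,\gamma}^V$, the value $W(t) := V(t,x(t))$ never exceeds $\gamma$, after which (\ref{eq:B2}) immediately yields the target-tube conclusion. The twist is that (\ref{eq:B1}) only supplies the dissipation inequality on the sublevel set itself, so instead of a single global integration I need a forward-invariance (bootstrap) argument.

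Concretely, I would first fix $x(t_0) \in \Omega_{t_0,\gamma}^V$. Continuity in $t$ and local Lipschitz in $x$ of $f$, $g$ and $k$ yield a unique $\mathcal{C}^1$ solution $x(\cdot)$ of the closed-loop system on some maximal right interval; the invariance conclusion will in turn trap the trajectory inside $\Omega_{t,\gamma}^V$ on $[t_0,T]$, ruling out finite escape and extending the solution to the full horizon. Along this trajectory, the chain rule gives
\begin{align*}
\dot W(t) \;=\; \frac{\partial V}{\partial t}(t,x(t)) + \frac{\partial V}{\partial x}(t,x(t))\bigl(f(t,x(t))+g(t,x(t))k(t,x(t))\bigr),
\end{align*}
and hypothesis (\ref{eq:B1}) says exactly that $\dot W(t) \leq 0$ whenever $W(t)\leq\gamma$.

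Next I would prove $W(t) \leq \gamma$ throughout $[t_0,T]$ by contradiction. Suppose $W(s) > \gamma$ for some $s \in (t_0,T]$ and let $\bar t := \sup\{t \in [t_0,s] : W(t) \leq \gamma\}$. Continuity of $W$ together with $W(t_0)\le\gamma$ force $\bar t < s$, $W(\bar t)=\gamma$, and $W(t)>\gamma$ on $(\bar t,s]$. Since $x(\bar t) \in \Omega_{\bar t,\gamma}^V$, (\ref{eq:B1}) gives $\dot W(\bar t)\leq 0$, while the strict positivity of $W-\gamma$ immediately to the right of $\bar t$ forces the one-sided derivative $\dot W(\bar t^+)\geq 0$, so $\dot W(\bar t)=0$. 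A Nagumo-type invariance argument for a $\mathcal{C}^1$ sublevel set under a locally Lipschitz vector field then excludes an immediate escape from $\Omega_{t,\gamma}^V$, producing the contradiction and yielding $W(t)\leq\gamma$ on all of $[t_0,T]$. Hypothesis (\ref{eq:B2}) upgrades the resulting containment $x(t)\in\Omega_{t,\gamma}^V$ to $x(t)\in\Omega_{t,0}^r$, which is the conclusion.

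The hardest step is the tangency case $\dot W(\bar t) = 0$: (\ref{eq:B1}) is a sharp first-order condition at the boundary, and excluding a trajectory that departs $\Omega_{t,\gamma}^V$ tangentially is where the Lipschitz regularity assumed on $k$ (together with that on $f,g$) is truly used, via a Nagumo/comparison argument rather than the pure integration step that sufficed in Proposition~\ref{prop1}. Everything else is the same dissipation bookkeeping, now restricted to the time interval on which the trajectory certifiably sits in the sublevel set.
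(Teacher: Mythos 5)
Your instinct is right that the one-line integration used for Proposition~\ref{prop1} no longer suffices here: since (\ref{eq:B1}) certifies the decrease condition only on $\Omega_{t,\gamma}^V$, one must first establish that the trajectory stays in that set, and your first-crossing setup (define $\bar t$, deduce $W(\bar t)=\gamma$ and $\dot W(\bar t)=0$) is the correct skeleton. Note that the paper itself sidesteps this entirely: Theorem~\ref{thm1} is ``proved'' only by reference to Theorem~\ref{thm3}, whose proof simply integrates the dissipation inequality as if it held along the whole trajectory --- exactly the circularity you are trying to repair. In that sense your attempt is more careful than the paper's own argument.

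However, the step you defer to ``a Nagumo-type invariance argument'' is a genuine gap, not a routine citation. Nagumo's theorem needs the closed-loop velocity to lie in the contingent (Bouligand) cone of the time-varying set $\Omega_{t,\gamma}^V$ at boundary points; the first-order inequality in (\ref{eq:B1}) delivers this only where $\partial V/\partial x$ does not vanish (and, for a moving set, only after also accounting for $\partial V/\partial t$). Where the gradient degenerates, (\ref{eq:B1}) becomes vacuous while the tangent cone can be strictly smaller than a half-space, and the conclusion can in fact fail under the stated hypotheses: take $n=1$, $\dot x = 0$, $[t_0,T]=[0,2]$, $\gamma=0$, $V(t,x)=\max(0,t-1)^2\,x$ (which is $\mathcal{C}^1$), and $r=V$. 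Then (\ref{eq:B1}) and (\ref{eq:B2}) hold for every $t$, and $x(0)=1\in\Omega_{0,0}^V=\mathbb{R}$, yet $x(t)\equiv 1\notin\Omega_{t,0}^{r}=\{x\le 0\}$ for $t>1$. So the tangency case cannot be excluded by any argument from these hypotheses alone; closing the proof requires an extra nondegeneracy ingredient --- e.g.\ a strict inequality (or $\epsilon$-margin, which the SOS certificates effectively supply) in (\ref{eq:B1}), or $\partial V/\partial x\neq 0$ on the $\gamma$-level set --- and your write-up should state which one it invokes. A secondary, smaller point: invariance of $\Omega_{t,\gamma}^V$ rules out finite escape only if that set is bounded, which is not assumed.
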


Again, $\Omega_{t_0, \gamma}^V$ is an inner-approximation of the BRS for the given target tube and the initial time, associated with the control law $k$. This theorem is a special case of Theorem \ref{thm3} stated later, and hence the proof of Theorem \ref{thm1} is omitted.

\He{
\begin{remark}
	If the same constant is added to $V$ and $\gamma$, the conditions (\ref{eq:B1}) and (\ref{eq:B2}) are unchanged. Hence $\gamma$ can be fixed to any specific value. However, $\gamma$ is retained here as it is exploited by Algorithm \ref{alg:alg1} introduced later in the paper.
\end{remark}
}

Since a less conservative inner-approximation is preferable, the volume of $\Omega_{t_0, \gamma}^V$ becomes the objective (to be maximized), resulting in an optimization problem, \He{where $\gamma$ is either fixed or a decision variable.}
\begin{highopt}\label{highopt1}($hi$-$opt_1$)
    \begingroup
	\allowdisplaybreaks
	\begin{align}
	&\sup_{V,k} \ \text{volume}(\Omega_{t_0,\gamma}^V) \nonumber \\
    & s.t. \ (\ref{eq:B1}) \ \text{and} \ (\ref{eq:B2}) \ \text{hold for all} \ t \in [t_0, T]  \nonumber 
	\end{align}
	\endgroup
\end{highopt}

\subsection{Modifications for Control Saturation}
In practice, the magnitude of control inputs to any system cannot be arbitrarily large, so we introduce constraints on the magnitude of control $u$.  Specifically, assume the set of control constraints is given as a \He{time- and state-varying polytope: 
\begin{align}
\mathcal{U}(t,x) := \{u \in \mathbb{R}^m: A(t,x)u \leq b(t,x)\}, \nonumber
\end{align}
where $A(t,x) \in \mathbb{R}^{n_p \times m}[t,x]$ and $b(t,x) \in  \mathbb{R}^{n_p}[t,x]$ are given matrix and vector valued polynomial functions, $n_p$ is the number of constraints on $u$,} and the symbol ``$\leq$" represents componentwise inequality. To take control saturation into account as in \cite{Jarvis:05}, we impose additional constraints for $V$ and $k$: for all $t \in [t_0, T]$, \He{
\begin{align}
&\Omega_{t,\gamma}^V \subseteq \{x \in \mathbb{R}^n : A(t,x)k(t,x) \leq b(t,x)\} \tag{C.1} \label{eq:C1}.
\end{align}
}This ensures while $x(t)$ lies in the funnel $\Omega_{t,\gamma}^V$, the control input $u$ derived from the control law $k$ remains within $\mathcal{U}(t,x)$.
 
Combining the high-level optimization problem $hi$-$opt_{\ref{highopt1}}$ and constraints (\ref{eq:C1}) yields a synthesis optimization that accounts for actuator limits.
\begin{highopt} \label{highopt2} ($hi$-$opt_2$)
	\begingroup
	\allowdisplaybreaks
	\begin{align}
	&\sup_{V,k} \ \text{volume}(\Omega_{t_0,\gamma}^V) \nonumber \\
	& s.t.\ \Omega_{t,\gamma}^V \subseteq \bigg\{x\in \mathbb{R}^n \bigg \vert\frac{\partial V}{\partial t} + \frac{\partial V}{\partial x} (f(t,x) + g(t,x)k(t,x)) \leq 0\bigg\}, \ \forall \ t \in [t_0, T], \tag{D.1} \label{eq:D1} \\
	& \Omega_{t,\gamma}^V \subseteq \Omega_{t,0}^{r}, \ \forall \ t \in [t_0, T], \tag{D.2} \label{eq:D2} \\
	&\Omega_{t,\gamma}^V \subseteq \{x \in \mathbb{R}^n : A(t,x)k(t,x) \leq b(t,x)\}, \forall \ t \in [t_0, T]. \tag{D.3} \label{eq:D3} 
	\end{align}
	\endgroup
\end{highopt}

\subsection{Reformulating as a Polynomial Optimization}
 As written, $hi$-$opt_{\ref{highopt1}}$ and $hi$-$opt_{\ref{highopt2}}$ involve many set containment constraints with a storage function and a control law as decision variables. The most common way of certifying set containments is the S-procedure, along with a method to check non-negativity. To check non-negativity, SOS relaxation is widely used when the functions are restricted to polynomials. Therefore, for practical computation, we restrict the system model, control law and storage function to be polynomials, i.e., $f \in \mathbb{R}^n[t,x], g \in \mathbb{R}^{n \times m}[t,x], k \in \mathbb{R}^m[t,x]$ and $V \in \mathbb{R}[t,x]$. \He{Note that it is sometimes possible to represent nonlinear system equations
with polynomials using combinations of change-of-variables, Taylor’s theorem and least squares regression.} \He{The error on the polynomial approximation can be handled by Theorem \ref{thm3} and is illustrated in the example \ref{ex_game}.} Since the formulation involves finite horizon problems on $[t_0, T]$, the function $h(t) := (t - t_0)(T - t)$ is important in the \He{S-procedure} as it is nonnegative on this interval. With these ideas, we reformulate constraints (\ref{eq:D1}) to (\ref{eq:D3}) resulting in an optimization problem with bilinear SOS constraints and a non-convex objective function. \He{ The  vector inequality in  (\ref{eq:D3}) represents many scalar inequalities. Denote row $i$ of $A$ by $A_i$ and element $i$ of $b$ as $b_i$.}
\begin{opt}\label{opt1} ($sosopt_1$) Fix $\epsilon > 0$.
	\begingroup
	\allowdisplaybreaks
	\begin{align}
	&\sup_{V,k,s} \ \text{volume}(\Omega_{t_0,\gamma}^V) \nonumber \\
	&s.t. \ s_2(t,x), s_3(t,x), (s_4(t, x) - \epsilon), s_7(t,x) \in \Sigma[t, x], \nonumber \\
	& s_{i,5}(t,x), s_{i,6}(t,x) \in \Sigma[t,x], \forall i = 1, ..., n_p, \nonumber \\
	& k(t,x) \in \mathbb{R}^m[t,x], V(t,x) \in \mathbb{R}[t,x], \tag{E.1} \label{eq:E1} \\
	&-\left(\frac{\partial V}{\partial t} + \frac{\partial V}{\partial x}(f(t,x) + g(t,x)k(t,x)) \right) - s_2(t,x)h(t) +s_3(t,x)(V(t,x) - \gamma ) \in \Sigma[t,x], \tag{E.2} \label{eq:E2} \\
	&-s_4(t, x) r(t, x) + V(t,x) - \gamma - s_7(t,x)h(t) \in \Sigma[t, x], \tag{E.3} \label{eq:E3} \\
	& \He{b_i(t,x) - A_i(t,x)k(t,x) + s_{i,5}(t,x)(V(t,x) - \gamma)}   \He{- s_{i,6}(t,x)h(t) \in \Sigma[t,x], \forall i = 1, ..., n_p,} \tag{E.4} \label{eq:E4}
	\end{align}
	\endgroup
\end{opt}  
\noindent where the positive number $\epsilon$ ensures that $s_4(t,x)$ is uniformly bounded away from 0. However the choice of $\epsilon$ does affect the optimization, with smaller values of $\epsilon$, theoretically less restrictive. \He{Due to numerical issues, the value must be chosen with care. If $\epsilon$ is too small, numerical issues might arise, but large values cause conservatism. Therefore, trial and error in the selection of $\epsilon$ may be necessary.}
 
 If a target set rather than a target tube is considered, then instead of enforcing constraint (\ref{eq:E3}), the corresponding SOS constraint for (\ref{eq:A3}) is imposed 
 \begin{align}
 -s_a(x) r_T(x) + V(T,x) - \gamma \in \Sigma[x], \tag{E.5} \label{eq:E5}
 \end{align}
 where \He{$(s_a(x) - \epsilon) \in \Sigma[x]$.}
 
In the constraints (\ref{eq:E2}) and (\ref{eq:E4}), there are three bilinear pairs involving decision variables \He{$\left(k, \frac{\partial V}{\partial x}\right)$, $(s_3, V)$, $(s_{i,5}, V)$}, rendering these constraints non-convex. To tackle the non-convex optimization problem, we decompose it into two subproblem, iteratively searching between the reachability storage function $V$ and multipliers / control laws $s,k$. In Algorithm \ref{alg:alg1}, $\epsilon$ is still a fixed small positive number, but $\gamma$ becomes a scalar decision variable.
\begin{algorithm} [H]
	\caption{Iterative method}
	\label{alg:alg1}
	\begin{algorithmic}[1]
	\Require{function $V^0$ such that constraints (E.2 - 4) are feasible by proper choice of $s, k, \gamma$.}
	\Ensure{($k$, $\gamma$, $V$) such that with the volume of $\Omega_{t_0,\gamma}^V$ having been enlarged.}
	\For{$j = 1:N_{iter}$}
		\State \He{$\boldsymbol{\gamma}$\textbf{-step}: decision variables $(s, k,\gamma)$.
			
		Maximize $\gamma$ subject to (E.2 - 4)  using $V = V^{j-1}$. This yields ($s_3^j, s_{i,5}^j, k^j$) and optimal reward $\gamma^j$.}
		\State \He{$\boldsymbol{V}\textbf{-step}$: decision variables $(s_1, s_2, s_4, s_{i,6}, s_7, V)$; 
			
		Maximize the feasibility (analytic center described below) subject to (E.2 - 4) as well as $s_1(x) \in \Sigma[x],$ and
		\begin{align}
		& -(V(t_0,x) - \gamma^j) +  s_1(x) (V^{j-1}(t_0,x) - \gamma^j) \in \Sigma[x], \tag{E.6} \label{eq:E6}
		\end{align}
		
		using ($\gamma = \gamma^j, s_3 = s_3^j, s_{i,5} = s_{i,5}^j, k=k^j$). This yields $V^j$.}
		\EndFor
		\end{algorithmic}
\end{algorithm}

 \begin{remark}
	\He{In the examples of Section \ref{ex_section}, the target region is a neighborhood around an equilibrium point, and a linear state-feedback for the linearization about the equilibrium point was used to compute the initial iterate, $V^0$, \cite{Ufuk:09} \cite{Erin:13}. }
\end{remark}

\begin{remark}
	The global optima in the $\gamma$-step can be computed by bisecting $\gamma$. Since only $(s_3, \gamma)$ and $(s_{i,5}, \gamma)$ enter bilinearly, and $\gamma$ is the objective function, the $\gamma$-step is a generalized SOS problem, which is proven in \cite{Pete:10} to be quasiconvex.  
\end{remark}

\begin{remark}
	\He{After the $\gamma$-step, many of the constraints are active. The subsequent $V$-step is formulated to return the decision variables at the analytic center of the feasible set \cite{Boyd:93} \cite{boyd:04}, pushing the newly computed storage function away from the constraints thus enabling further progress on the next $\gamma$ step.}
\end{remark}	

\begin{remark}\label{grow1}
	\He{	(\ref{eq:E6}) enforces $\Omega_{t_0, \gamma^j}^{V^{j-1}} \subseteq \Omega_{t_0, \gamma^j}^{V^{j}}$, which ensures that the BRS inner-approximation computed by the $j$'th $V$-step at least contains the inner-approximation obtained by the $j$'th $\gamma$-step.}
\end{remark}
\He{
\begin{theorem} \label{grow2}
	The BRS inner-approximation from the $(j+1)$'th $\gamma$-step contains the inner-approximation from the $j$'th $V$-step: $ \Omega_{t_0, \gamma^{j}}^{V^{j}} \subseteq  \Omega_{t_0, \gamma^{j+1}}^{V^{j}}$.
\end{theorem}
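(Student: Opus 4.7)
The plan is to reduce the set inclusion to a scalar inequality on $\gamma$ and then establish that inequality by a feasibility argument. Both sets in the claimed inclusion are sub-level sets of the same storage function $V^{j}$, and $\Omega_{t_0,\gamma}^{V^{j}}=\{x:V^{j}(t_0,x)\le\gamma\}$ is monotone nondecreasing in $\gamma$. Hence the inclusion $\Omega_{t_0,\gamma^{j}}^{V^{j}}\subseteq\Omega_{t_0,\gamma^{j+1}}^{V^{j}}$ is equivalent to the scalar inequality $\gamma^{j+1}\ge\gamma^{j}$, so the task becomes showing that the $(j+1)$-th $\gamma$-step attains an objective value at least $\gamma^{j}$.

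To establish $\gamma^{j+1}\ge\gamma^{j}$, I would exhibit $\gamma=\gamma^{j}$ as a feasible point of the $(j+1)$-th $\gamma$-step. That step fixes $V=V^{j}$ and searches over $(s,k,\gamma)$ subject to (\ref{eq:E2})--(\ref{eq:E4}). At the end of the $j$-th $V$-step we have in hand not only $V^{j}$ but also a complete tuple of multipliers certifying (\ref{eq:E2})--(\ref{eq:E4}) with $V=V^{j}$, $\gamma=\gamma^{j}$, and $(s_3,s_{i,5},k)=(s_3^{j},s_{i,5}^{j},k^{j})$ frozen from the preceding $\gamma$-step. Plugging exactly this tuple (the frozen triple $s_3^{j},s_{i,5}^{j},k^{j}$ together with the $s_2,s_4,s_{i,6},s_7$ returned by the $j$-th $V$-step) into the decision variables of the $(j+1)$-th $\gamma$-step yields a feasible point with objective value $\gamma^{j}$. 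Since the $\gamma$-step maximizes $\gamma$, its optimum satisfies $\gamma^{j+1}\ge\gamma^{j}$, which is the desired inequality, and the set inclusion follows from the monotonicity of sub-level sets in $\gamma$.

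The main obstacle is not analytic but notational: one must keep careful track of which variables are frozen versus searched over in each sub-step, and observe that the full certifying tuple produced by one sub-step is always a feasible starting point for the next. Once this bookkeeping is in place, the proof reduces to a one-line ``reuse the previous certificate'' argument. Combined with Remark \ref{grow1}, which asserts $\Omega_{t_0,\gamma^{j}}^{V^{j-1}}\subseteq\Omega_{t_0,\gamma^{j}}^{V^{j}}$ via constraint (\ref{eq:E6}), Theorem \ref{grow2} yields monotonicity across both sub-steps and hence a non-decreasing sequence of certified BRS inner-approximations.
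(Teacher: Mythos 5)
Your proof is correct and follows essentially the same route as the paper: reuse the certificate $(s_2^j,s_4^j,s_{i,6}^j,s_7^j,V^j)$ from the $j$'th $V$-step together with the frozen $(\gamma^j,s_3^j,s_{i,5}^j,k^j)$ as a feasible point of the $(j+1)$'th $\gamma$-step, so optimality gives $\gamma^{j+1}\ge\gamma^j$. Your only addition is to spell out the (correct, but implicit in the paper) step that monotonicity of sub-level sets in $\gamma$ converts this scalar inequality into the claimed inclusion.
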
 
\begin{proof}
	The obtained decision variables $(s_2^j, s_4^j, s_{i,6}^j, s_7^j, V^j)$ from the $j$'th $V$-step along with the fixed values (from the $j$'th $\gamma$-step) $(\gamma^j, s_3^j, s_{i,5}^j, k^j)$, are feasible for (E.2 - 4), and thus are feasible for the $(j+1)$'th $\gamma$-step. Since $\gamma^{j+1}$ is the optimal reward of the $(j+1)$'th $\gamma$-step, it gives $\gamma^{j+1} \ge \gamma^j$.
\end{proof}

From Remark \ref{grow1} and Theorem \ref{grow2} we can conclude that quality of the BRS inner-approximation will improve with each iteration.
}
\begin{remark} \label{noglobal}
	\He{Coordinate-wise algorithms do not in general converge to the global optima. Thus} although  the subproblems in the $\gamma$-step and $V$-step at each iteration are solved exactly, the iterative algorithm does not necessarily yield the global optimal solution for the optimization $sosopt_{\ref{opt1}}$.  
\end{remark}

\section{Incorporating System Uncertainties}
Two different sources of uncertainty are addressed. Uncertainties with $\mathcal{L}_2$ bounds, denoted as $w$, are used to model external disturbances. Time-varying uncertainties with $\mathcal{L}_{\infty}$ bounds, denoted as $\delta$, are used to model uncertain parameters in the system. Thus the dynamical system is
\begin{align}
\dot{x}(t) = f(t,x(t),w(t),\delta(t)) + g(t,x(t),w(t),\delta(t))u(t), \label{eq:system2}
\end{align}
with $w(t) \in \mathbb{R}^{n_w}$, $\delta(t) \in \mathbb{R}^{n_{\delta}}$, and \He{polynomial vector field $f \in \mathbb{R}^n[t,x,w,\delta]$,  $g \in \mathbb{R}^{n \times m}[t,x,w,\delta]$}. 

The assumptions on $\delta$ and $w$ are as follows. The parametric uncertainties $\delta(t)$ belong to the set $\Delta_{\overline{\delta}} := \left\{\delta \in \mathbb{R}^{n_{\delta}} \vert \delta^T\delta \leq \overline{\delta}^2 \right\}$. A non-decreasing polynomial function $q$ satisfying $q(t_0)=0$, $q(T)=1$ describes how fast the energy of $w$ can be released. Specifically, disturbances $w$ satisfy $\int_{t_0}^t w(\tau)^T w(\tau) d\tau \le R^2 q(t), \forall t \in [t_0, T]$. The quantities $\bar{\delta}$, $R$ and $q(\cdot)$ are assumed to be given.
\begin{theorem} \label{thm3}
	Given system (\ref{eq:system2}), initial time $t_0$, terminal time $T \ge t_0$, \He{a function $r$ and associated target tube $\Omega_{t,0}^{r}$}, bounds $\bar{\delta}$, $R$ and function $q(\cdot)$. If there exists a $\mathcal{C}^1$ function $V: \mathbb{R} \times \mathbb{R}^n \rightarrow \mathbb{R}$,  and a control law $k: \mathbb{R} \times \mathbb{R}^n \times \mathbb{R}^{n_w} \times \mathbb{R}^{n_{\delta}} \rightarrow \mathbb{R}^m$, such that for all $ (t,w,\delta) \in [t_0, T]\times\mathbb{R}^{n_w}\times \Delta_{\overline{\delta}}$,
		\begin{align}
	&\Omega_{t,\gamma+R^2q(t)}^V \subseteq \bigg\{x \in \mathbb{R}^n \bigg\vert \frac{\partial V}{\partial t} + \frac{\partial V}{\partial x}( f(t,x,w,\delta) +  g(t,x,w,\delta)k(t,x,w,\delta)) \leq w^T w \bigg\} \tag{F.1} \label{eq:F1} 
	\end{align}
	and for all $\ t \in [t_0, T]$,
	\begin{align}
	 \Omega^V_{t,\gamma+R^2q(t)} \subseteq \Omega_{t,0}^{r}, \tag{F.2} \label{eq:F2} 
	\end{align}
	then for all $x(t_0) \in \Omega_{t_0, \gamma}^V$, $x(t) \in \Omega_{t,0}^{r}$, for all $t \in [t_0, T]$, under the control law $k$. 
\end{theorem}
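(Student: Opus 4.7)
The plan is to run the same dissipation argument as in Proposition~\ref{prop1}, but with the constant level $\gamma$ replaced by the disturbance-aware budget $\gamma + R^2 q(t)$. The goal is to show that any trajectory $x(\cdot)$ of the closed-loop uncertain system (\ref{eq:system2}) starting from $x(t_0) \in \Omega_{t_0,\gamma}^V$ satisfies $V(t,x(t)) \le \gamma + R^2 q(t)$ for every $t \in [t_0,T]$. Once this sublevel-set invariance is in hand, (\ref{eq:F2}) instantly yields $x(t) \in \Omega_{t,0}^r$ on the same interval, which is exactly the conclusion of the theorem.

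First I would note that, since $q(t_0)=0$ and $V(t_0,x(t_0)) \le \gamma$, the initial state lies in the enlarged funnel $\Omega_{t_0,\gamma+R^2 q(t_0)}^V$. Then I would apply (\ref{eq:F1}) along the trajectory with the pointwise realizations $w(t)$ and $\delta(t) \in \Delta_{\bar\delta}$: so long as $x(\tau) \in \Omega_{\tau,\gamma+R^2 q(\tau)}^V$, this gives $\dot V(\tau,x(\tau)) \le w(\tau)^T w(\tau)$ almost everywhere. Integrating from $t_0$ to $t$ and using the hypothesis $\int_{t_0}^t w(\tau)^T w(\tau)\,d\tau \le R^2 q(t)$ yields
\[
V(t,x(t)) \le V(t_0,x(t_0)) + \int_{t_0}^t w(\tau)^T w(\tau)\,d\tau \le \gamma + R^2 q(t),
\]
so the trajectory has not yet left the funnel. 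This is the bootstrap step.

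To upgrade the bootstrap to the entire interval, I would use a standard continuation argument: let $T^\star := \sup\{t \in [t_0,T] : V(s,x(s)) \le \gamma + R^2 q(s)\ \text{for all}\ s \in [t_0,t]\}$. This supremum is well-defined (the set contains $t_0$) and, by continuity of $V$, $q$ and $x(\cdot)$, attained, so $V(T^\star,x(T^\star)) \le \gamma + R^2 q(T^\star)$. If $T^\star < T$, applying the bootstrap on $[T^\star,t]$ with the residual budget $\int_{T^\star}^t w^T w\,d\tau \le R^2(q(t)-q(T^\star))$ should propagate the inequality slightly past $T^\star$, contradicting the definition of $T^\star$ and forcing $T^\star = T$. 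Combining with (\ref{eq:F2}) then delivers $x(t) \in \Omega_{t,0}^r$ for all $t \in [t_0,T]$.

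The main obstacle I expect is the boundary case in the continuation step: because (\ref{eq:F1}) only certifies $\dot V \le w^T w$ on the closed sublevel set $\Omega_{t,\gamma+R^2 q(t)}^V$, when $V(T^\star,x(T^\star)) = \gamma + R^2 q(T^\star)$ one cannot directly invoke (\ref{eq:F1}) on a strictly larger right-neighborhood of $T^\star$. The clean remedy is either a mild $\epsilon$-perturbation of $\gamma$ (working with $\gamma+\epsilon$ and letting $\epsilon \downarrow 0$) or a comparison-principle style argument that exploits absolute continuity of $V(\cdot,x(\cdot))$ together with the \emph{global} $\mathcal{L}_2$ bound on $w$; both routes close the continuation gap and complete the proof.
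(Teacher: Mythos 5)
Your proposal is correct and follows essentially the same dissipation argument as the paper's proof: integrate the inequality certified by (\ref{eq:F1}) along the closed-loop trajectory, bound $\int_{t_0}^t w^T w\,d\tau$ by $R^2 q(t)$, and invoke (\ref{eq:F2}). The only difference is that the paper's one-line proof simply ``integrates the dissipation inequality,'' silently assuming the trajectory remains in the funnel $\Omega_{t,\gamma+R^2q(t)}^V$ where (\ref{eq:F1}) applies, whereas your continuation/bootstrap step (and your honest flagging of the boundary case) makes that implicit circularity explicit and rigorous.
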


\begin{proof}
	By assumption $x(t_0) \in \Omega_{t_0,\gamma}^V$, then we have $V(t_0, x(t_0)) \leq \gamma$. Integrating the dissipation inequality in (\ref{eq:F1}), we have $V(t,x(t)) \leq V(t_0, x(t_0)) + \int_{t_0}^t w(\tau)^T w(\tau) d\tau \leq \gamma + \int_{t_0}^t w(\tau)^T w(\tau) d\tau \leq \gamma + R^2q(t), \forall t \in [t_0, T]$, and it follows from (\ref{eq:F2}) that $x(t) \in \Omega_{t,0}^r, \forall t \in [t_0, T]$. 
\end{proof}
\He{
\begin{remark}\label{k_restrict}
	In Theorem \ref{thm3}, the control law is allowed to depend on $t,x,w,\delta$. Restricting the dependence of $k$ is straightforward, as the theorem remains true if $k$ depends on a subset of the variables $t,x,w,\delta$. 
\end{remark}
}

 If $q$ is not known beforehand, meaning $w$ only satisfies $\int_{t_0}^T w(\tau)^T w(\tau)d\tau \le R^2$, the constraints (\ref{eq:F1}) and (\ref{eq:F2}) need to be modified:  for all $(t,w,\delta) \in [t_0, T]\times\mathbb{R}^{n_w}\times \Delta_{\overline{\delta}}$, 
\begin{align}
&\Omega_{t,\gamma+R^2}^V \subseteq \bigg\{x \in \mathbb{R}^n \bigg\vert \frac{\partial V}{\partial t} + \frac{\partial V}{\partial x}( f(t,x,w,\delta) +  g(t,x,w,\delta)k(t,x,w,\delta)) \leq w^T w \bigg\} \nonumber 
\end{align}
and for all $t \in [t_0, T]$, $\Omega^V_{t,\gamma+R^2} \subseteq \Omega_{t,0}^{r}$.  

Control saturation is again addressed by adding appropriate constraints. \He{This culminates in a state-feedback synthesis BRS optimization that accounts for actuator limits, external disturbances, and parametric uncertainties,}
\begin{highopt}\label{highopt3} ($hi$-$opt_3$)
	\begingroup
	\allowdisplaybreaks
	\begin{align}
	&\sup_{V,k} \ \text{volume}(\Omega_{t_0,\gamma}^V) \nonumber \\
	& s.t.\ \Omega_{t, \gamma+R^2q(t)}^V \subseteq \bigg\{x\in \mathbb{R}^n \bigg \vert\frac{\partial V}{\partial t} + \frac{\partial V}{\partial x} (f(t,x,w,\delta) +  g(t,x,w,\delta)k(t,x,w,\delta)) \leq w^Tw\bigg\}, \nonumber \\
	& \quad \quad \quad \forall \ (t,w,\delta) \in [t_0, T]\times \mathbb{R}^{n_w} \times \Delta_{\overline{\delta}},  \tag{G.1} \label{eq:G1} \\
	&\Omega_{t,\gamma + R^2q(t)}^V \subseteq \Omega_{t,0}^{r}, \ \forall \ t \in [t_0, T], \tag{G.2} \label{eq:G2}\\
	&\He{\Omega_{t,\gamma+R^2q(t)}^V \subseteq \left\{x \in \mathbb{R}^n \middle\vert A_i(t,x)k(t,x,w,\delta) \leq b_i(t,x) \right\},} \ \forall\ (t,w,\delta) \in [t_0, T]\times\mathbb{R}^{n_w} \times \Delta_{\overline{\delta}}, \forall i = 1, ..., n_p. \tag{G.3} \label{eq:G3}
	\end{align}
	\endgroup
\end{highopt}

If, in addition to the $\mathcal{L}_2$ bound, $w$ satisfies an $\mathcal{L}_{\infty}$ constraint, $w(t) \in \Delta_{\overline{w}} := \{ w \in \mathbb{R}^{n_w} \vert w^Tw \leq \overline{w}^2\}$, then constraints (\ref{eq:G1}) (\ref{eq:G3})  are only restricted to hold for all $(t, w, \delta) \in [t_0, T] \times \Delta_{\overline{w}} \times \Delta_{\overline{\delta}}$. 

Applying SOS relaxation and the S-procedure to $hi$-$opt_{\ref{highopt3}}$, yields the following optimization problem. Again, $\epsilon$ is a fixed small positive number.
\begin{opt}($sosopt_2$) \label{opt2} 
	\begingroup
	\allowdisplaybreaks
	\begin{align}
	&\sup_{V,k,s} \ \text{volume}(\Omega_{t_0,\gamma}^V) \nonumber \\
	&s.t. \ s_{l}(t,x,w,\delta) \in \Sigma[t,x,w,\delta], \forall l = 2,3,8,9, \nonumber \\
	&(s_4(t,x) - \epsilon), s_7(t,x) \in \Sigma[t,x], \nonumber \\
	& s_{i,j}(t,x,w,\delta) \in \Sigma[t,x,w,\delta], \forall i = 1, ..., n_p, \forall j = 5,6,10,11,\nonumber \\
	&k(t,x,w,\delta) \in \mathbb{R}^m[t,x,w,\delta], V(t,x) \in \mathbb{R}[t,x], \tag{H.1} \label{eq:H1} \\
	&-\bigg(\frac{\partial V}{\partial t} + \frac{\partial V}{\partial x}(f(t,x,w,\delta) + g(t,x,w,\delta) k(t,x,w,\delta))  - w^Tw \bigg) +s_3(t,x,w,\delta)(V(t,x) - \gamma -R^2q(t)) \nonumber \\
	&\quad \quad - s_2(t,x,w,\delta)h(t) + s_{8}(t,x,w,\delta)(w^Tw - \overline{w}^2) +s_{9}(t,x,w,\delta)(\delta^T\delta - \overline{\delta}^2 ) \in \Sigma[t,x,w,\delta], \tag{H.2} \label{eq:H2} \\
	&-s_4(t,x) r(t,x) + V(t,x) - \gamma - R^2q(t) - s_7(t,x)h(t)\in \Sigma[t,x], \tag{H.3} \label{eq:H3} \\
	& \He{b_i(t,x) - A_i(t,x)k(t,x,w,\delta) + s_{i,11}(t,x,w,\delta)(w^T w - \overline{w}^2 )}  \He{+ s_{i,5}(t,x,w,\delta)(V(t,x) - \gamma - R^2q(t)) } \nonumber \\
	&\quad \quad  \He{- s_{i,6}(t,x,w,\delta)h(t) + s_{i,10}(t,x,w,\delta)(\delta^T\delta - \overline{\delta}^2 )}  \He{ \in \Sigma[t,x,w,\delta], \forall i = 1, ..., n_p.}   \tag{H.4} \label{eq:H4}
	\end{align}
	\endgroup
\end{opt}
By slightly modifying Algorithm \ref{alg:alg1}, an iterative algorithm for $sosopt_{\ref{opt2}}$ is developed.

\begin{remark}
	 As mentioned in Remark \ref{k_restrict}, the dependence of $k$ can be more restrictive and the multipliers simplify. For example, with $k(t,x)$, $s_{i,10}$ and $s_{i,11}$ can be eliminated and $s_{i,5}$ and $s_{i,6}$ only need to depend on $t$ and $x$. \He{Example \ref{ex_GTM} illustrates this flexibility.}
\end{remark}

\section{Examples}\label{ex_section}
A workstation with a 2.7 [GHz] Intel Core i5 64 bit processor and 8[GB] of RAM was
used for performing all computations in the following examples. The SOS optimization
problem is formulated and translated into SDP using the sum-of-square module in SOSOPT \cite{Pete:13} on MATLAB, and solved by the SDP solver MOSEK \cite{Mosek:17}. Table \ref{tab:table} shows the degree of various
polynomials and the computation time.
\begin{table}[h]
	\caption{Computation times for each example \label{tab:table}}
	\centering
	\begin{tabular}{ |M{3.6cm}|M{2.0cm}|M{2.0cm}|M{1.4cm}|M{1.6cm}|M{3.8cm}|}
		\hline
		Examples / sections & Number of States &Degree of Dynamics &Degree of $V$ &Degree of $s, k$ &Computing Time [sec] \\
		\hline
		\ref{ex_car}  &3 &1 & 6 &2  & $7.2 \times 10^3$   \\ \hline
	     \ref{ex_car_obs}  &3 &1 & 6 &2  & $8.7 \times 10^3$   \\ \hline
		 \ref{ex_pendubot}   &4&3&4&4&$1.3 \times 10^4$ \\ \hline
		 \ref{ex_GTM}: GTM without $w$  &4&3&4&4& $1.1 \times 10^4$ \\ \hline
		 \ref{ex_GTM}: GTM with $w$, $k(t,x)$   &4&3&4&4& $2.1 \times 10^4$ \\ \hline
		 \ref{ex_GTM}: GTM with $w$, $k(t,x,w)$   &4&3&4&4& $4.8 \times 10^4$ \\ \hline
		 \ref{ex_game}  &3&3&6&2& $7.2 \times 10^3$ \\ \hline
	\end{tabular}
\end{table}
%

\subsection{Dubin's Car} \label{ex_car}
Consider the Dubin's car \cite{Dubins:57}, a multi-input system: $\dot{a} = v \cos(\theta), \dot{b} = v \sin(\theta), \dot{\theta} = \omega,$ with states $a$: $x$ position (m), $b$: $y$ position (m), $\theta$: yaw angle (rad) and control inputs $\omega$: turning rate (rad/s), $v$: forward speed (m/s). By the change of variables, $x_1 = \theta$, $x_2 = a \cos(\theta)+b \sin(\theta)$, $x_3 = -2(a \sin(\theta) - b \cos(\theta))+\theta x_2$, and $u_1 = \omega$, $u_2 = v - \omega(a \sin(\theta) - b \cos(\theta))$, it is transformed into polynomial dynamics \cite{David:07}: 

\begin{equation}
\begin{aligned}
&\dot{x}_1 = u_1, \\ 
&\dot{x}_2 = u_2, \\ 
&\dot{x}_3 = x_2 u_1 - x_1 u_2. 
\end{aligned}
\end{equation}
We take $[t_0, T] = [0, 4 \ \text{sec}]$, $r_T(x) = x^T x - 0.2^2$, $\epsilon = 1\times10^{-3}$, and impose bounds on control inputs $u_1, u_2 \in [-1, 1]$. A closed-loop simulation with the resulting controller and initial condition $[-0.8, 1.4, 0.3]$ is shown in Figure \ref{fig:carSim}.  Figure \ref{fig:car_obs} shows the slices of sets with $x_3 = 0$, $x_2 = 0$, and $x_1 = 0$, respectively. $\Omega_{t_0,\gamma}^V$ is shown as the dashed curves and $\Omega_0^{r_T}$ is shown as the dash-dot curves. 
 
\begin{figure}[h]
	\centering
	\includegraphics[width=0.5\textwidth]{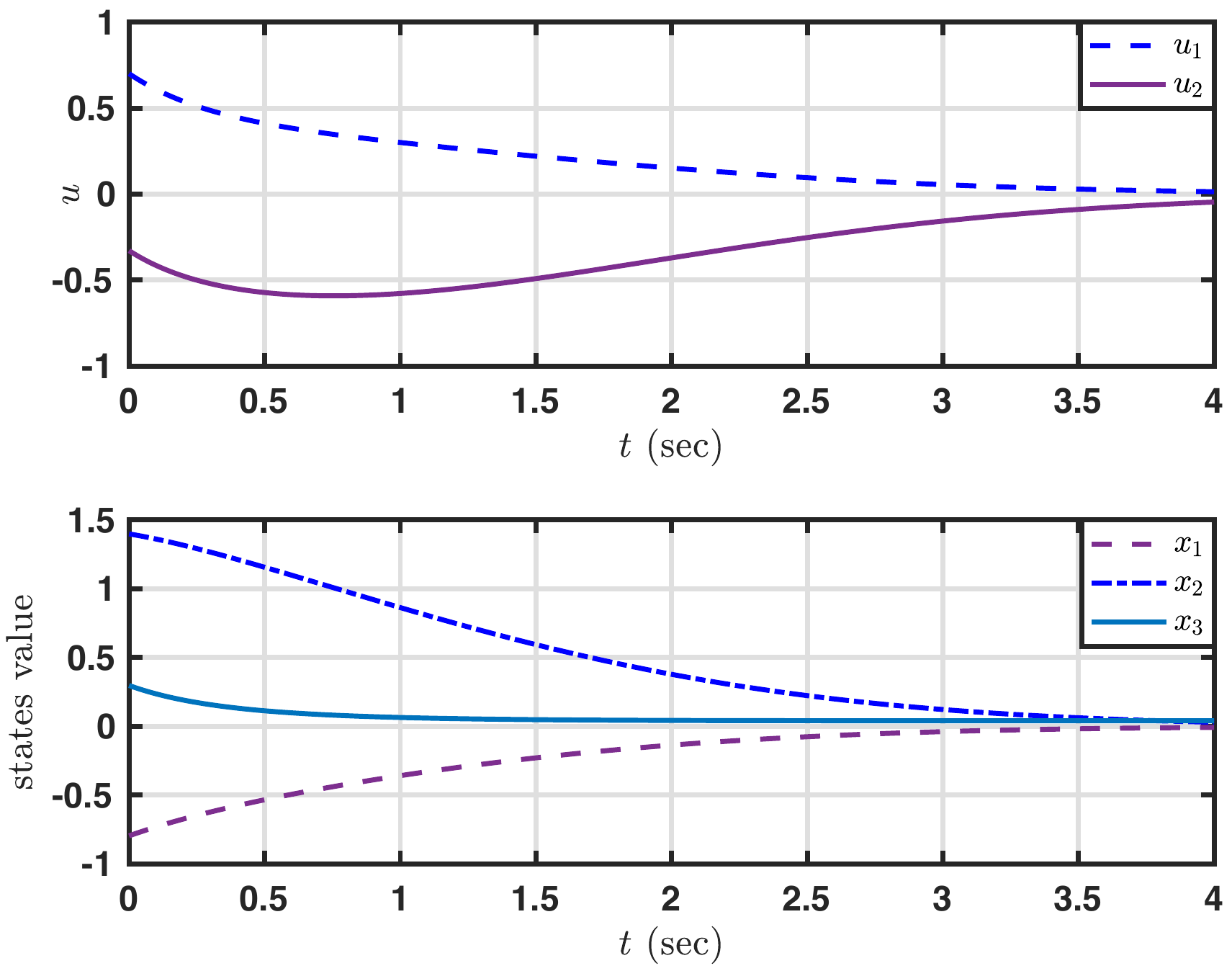}
	\caption{Dubin's car Simulations}
	\label{fig:carSim}    
\end{figure}

\subsubsection{Dubin's Car with Obstacle} \label{ex_car_obs}
In addition to the terminal target set $\Omega_{0}^{r_T}$, suppose there is an unsafe region $\Omega_0^{obs} := \{x \in \mathbb{R}^3 | obs(x) := (x_1 - 1.5)^2 + x_2^2 + x_3^2 - 0.5^2 \leq 0\}$. Thus the target tube is the intersection of the terminal target set $\Omega_0^{r_T}$ and the complement of $\Omega_{0}^{obs}$. Slices of the resulting $\Omega_{t_0,\gamma}^V$ with the obstacle are shown as solid black curves in Figure \ref{fig:car_obs}.
\begin{figure}[h]
	\centering
	\includegraphics[width=0.7\textwidth]{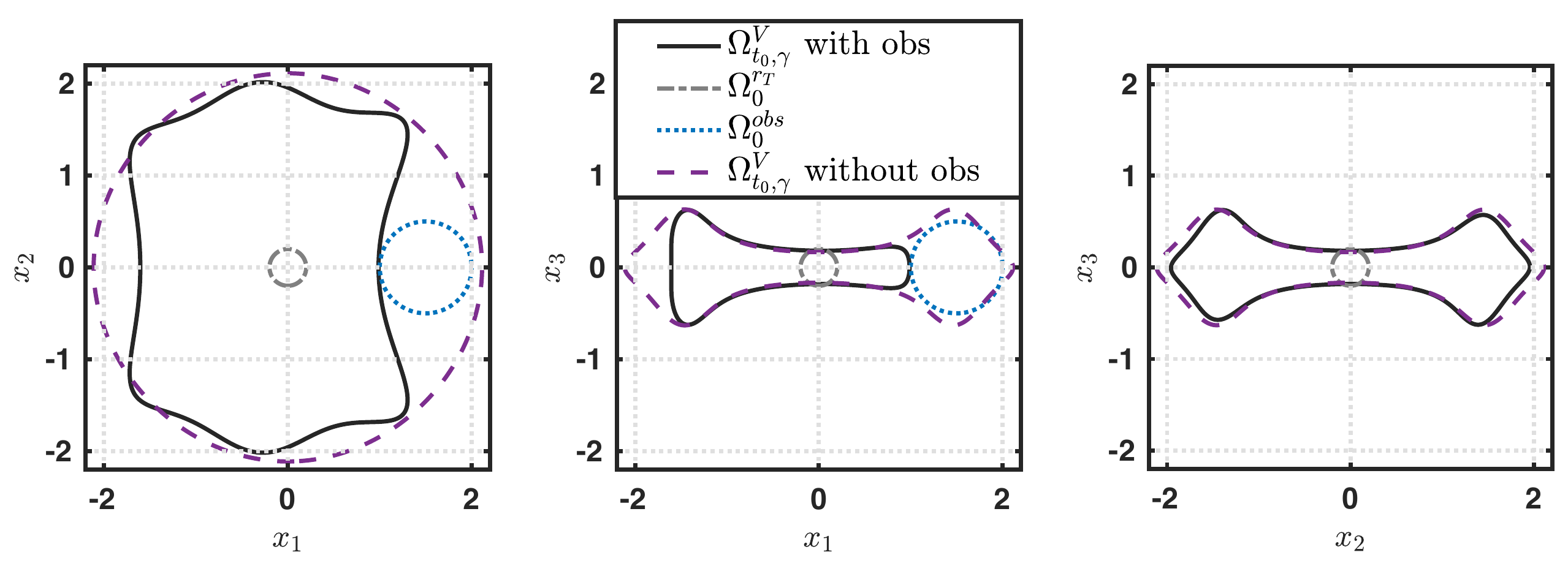}
	\caption{Inner-approximated BRS for Dubin's car example}
	\label{fig:car_obs}    
\end{figure}

\subsection{Pendubot Example}\label{ex_pendubot}
Consider the following polynomial dynamics for a pendubot
\begingroup
\allowdisplaybreaks
\begin{align}
\bmat{\dot{x}_1 \\ \dot{x}_2 \\ \dot{x}_3 \\ \dot{x}_4} = \bmat{x_2 \\ f_2(x_1,x_2,x_3,x_4) \\ x_4 \\ f_4(x_1,x_2,x_3,x_4)} + \bmat{0 \\ g_2(x_3) \\ 0 \\ g_4(x_3)}u, \nonumber
\end{align}
\endgroup
with
\begingroup
\allowdisplaybreaks
\begin{align}
f_2 &= - 10.656 x_1^3 + 11.531 x_1^2 x_3 + 7.885 x_1 x_3^2 + 0.797 x_2^2 x_3  + 0.841 x_2 x_3 x_4 + 21.049 x_3^3  + \nonumber \\
&\quad \quad \quad 0.420 x_3 x_4^2 + 66.523 x_1 - 24.511 x_3, \nonumber \\
f_4 &= 10.996 x_1^3 - 48.915 x_1^2 x_3 - 6.404 x_1 x_3^2 - 2.396 x_2^2 x_3  - 1.594 x_2 x_3 x_4 - 51.909 x_3^3 - \nonumber \\
&\quad \quad \quad 0.797 x_3 x_4^2 - 68.642 x_1 + 103.978 x_3, \nonumber \\
g_2 &= -10.096 x_3^2 + 44.252, \nonumber \\
g_4 &=  37.802 x_3^2 - 83.912, \nonumber
\end{align}
\endgroup
which is obtained as a least-squares approximation of the full equations for $x_1 \times x_3 \in [-1, 1] \times [-1, 1]$.

Here $x_1$ and $x_3$ represent $\theta_1$ (rad) and $\theta_2$ (rad), which are angular positions of the first link and the second link (relative
to the first link), respectively, and $x_2$ and $x_4$ are $\dot{\theta}_1$ (rad/s) and $\dot{\theta}_2$ (rad/s), which are corresponding angular velocities. Input $u$ (Nm) is the torque applied at the joint of first link and ground, but there is no torque applied at the joint of two links. 

We take the time horizon $[0, 4 \ \text{sec}]$, $r_T(x) = x^T$ $diag(1/0.1^2,$ $1/0.35^2,$ $1/0.1^2,$ $1/0.35^2) x - 1$, $\epsilon = 1\times 10^{-4}$, and impose a bound on the control input $u \in [-1, 1]$. Slices of sets shown on the left side of Figure \ref{fig:pendubot} are plotted with $\dot{\theta}_1$ and $\dot{\theta}_2$ fixed at $0$. Slices of sets shown on the right side of Figure \ref{fig:pendubot} are plotted with $\theta_1$ and $\theta_2$ fixed at $0$. 
\begin{figure}[h]
	\centering
	\includegraphics[width=0.6\textwidth]{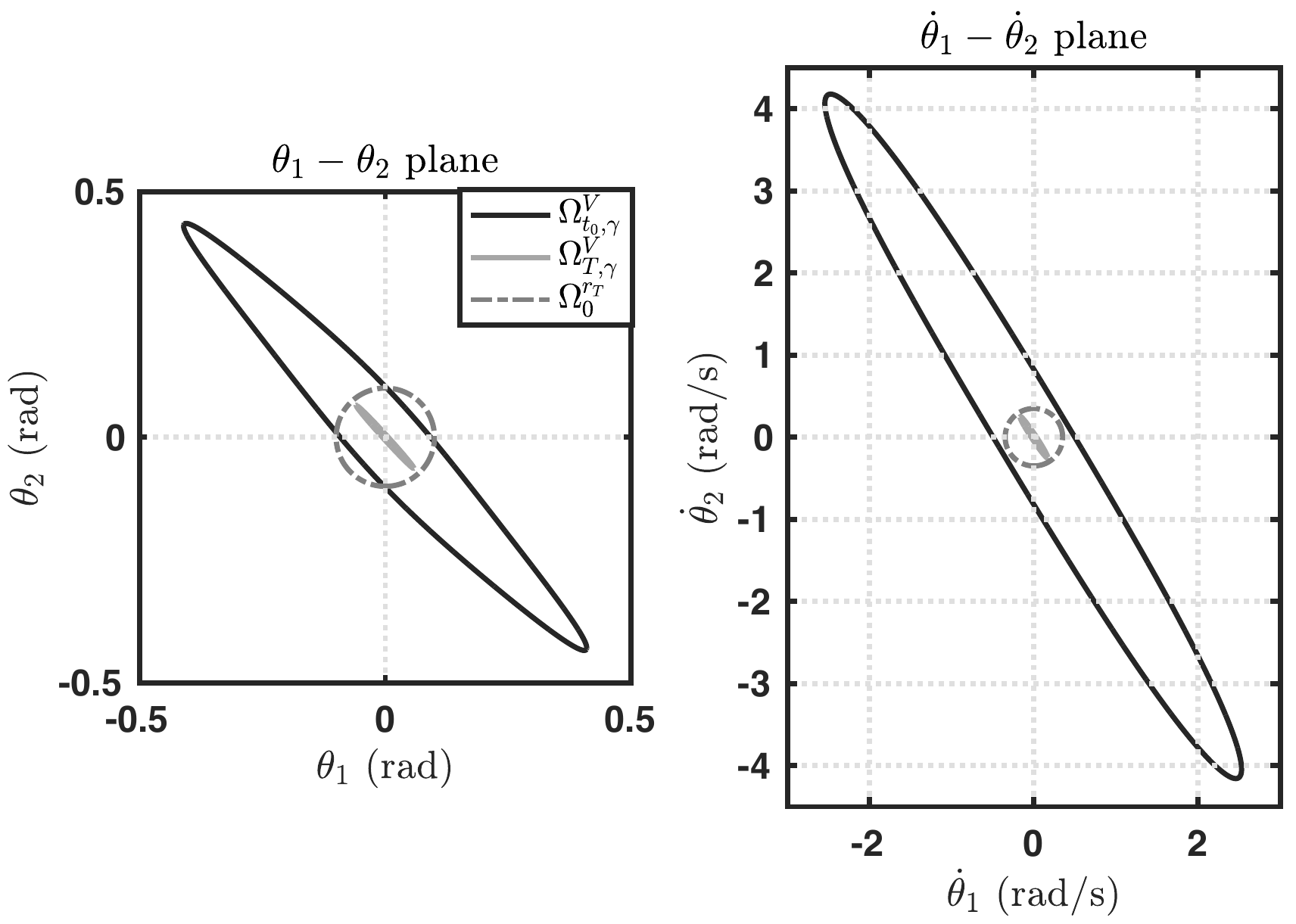}
	\caption{Inner-approximated BRS for the pendubot example}
	\label{fig:pendubot}    
\end{figure}

A simulation result with the initial condition [$-0.35$ rad; $2.6$ rad/s; $0.35$ rad; $-4$ rad/s], under the designed polynomial control law is shown in Figure \ref{fig:pendubotSim}.
\begin{figure}[h]
	\centering
	\includegraphics[width=0.48\textwidth]{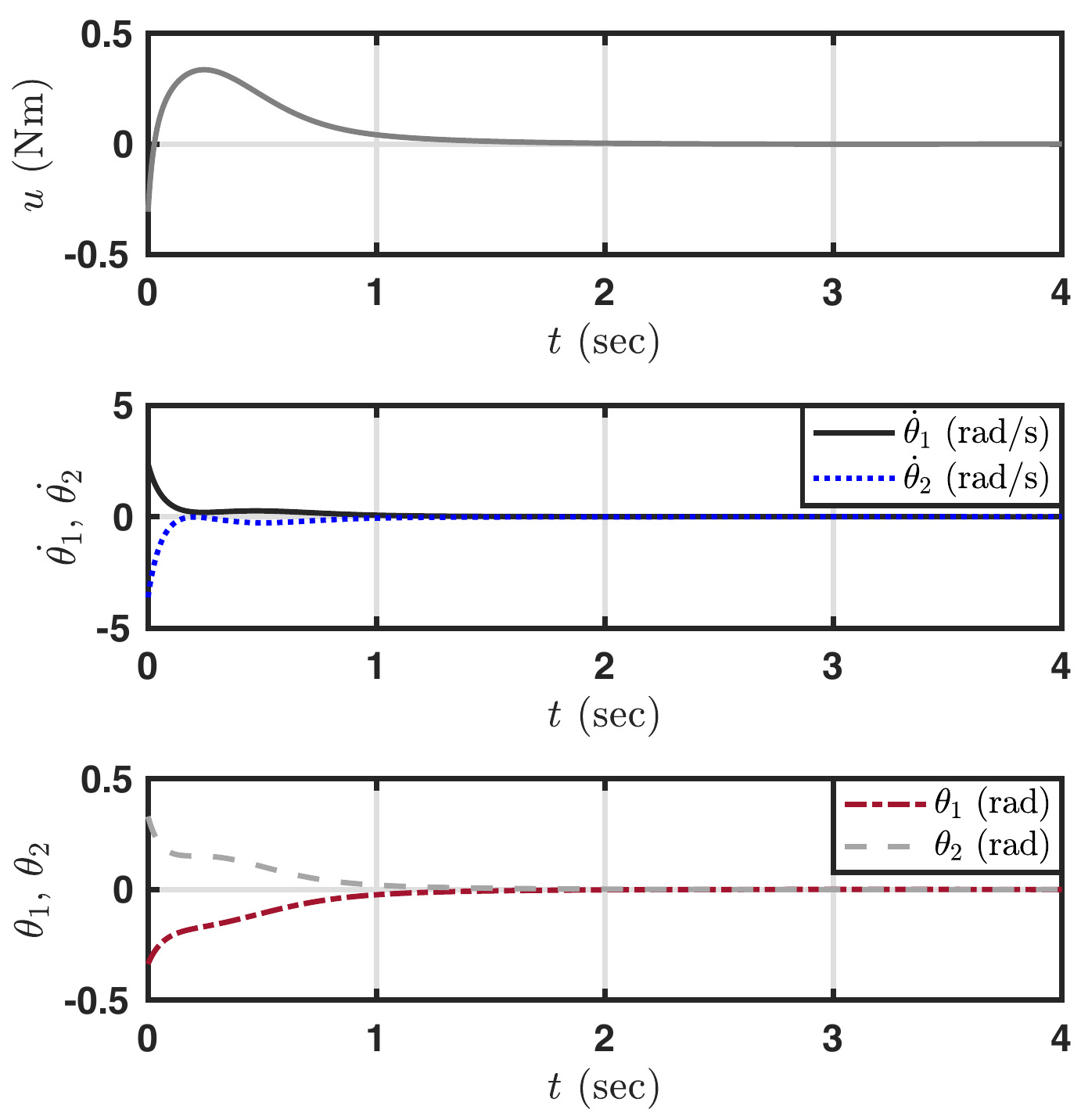}
	\caption{Pendubot simulation results}
	\label{fig:pendubotSim}    
\end{figure}

\subsection{NASA's Generic Transport Model (GTM) around straight and level flight condition with $\mathcal{L}_2$ Disturbance}\label{ex_GTM}
The GTM is a remote-controlled 5.5\% scale commercial aircraft \cite{Murch:07}. From \cite{Brain:92}, its longitudinal dynamical model is 
\begin{equation}
\begin{aligned}
\dot{x}_1 = & \frac{1}{m}(-D -mg\sin(x_4 - x_2) + T_x \cos(x_2)  +  T_z \sin(x_2)),   \\
\dot{x}_2 = & \frac{1}{m x_1}(-L +mg \cos(x_4 - x_2)  - T_x \sin(x_2) + T_z \cos(x_2) + x_3), \label{eq:systemGTM}\\
\dot{x}_3 = & \frac{M + T_m}{I_{yy}},  \\
\dot{x_4} = & x_3,  
\end{aligned}
\end{equation}
where $x_1$ to $x_4$ represent air speed (m/s), angle of attack (rad), pitch rate (rad/s) and pitch angle (rad), respectively. The control inputs are elevator deflection $u_{elev}$ (rad) and engine throttle $u_{th}$ (percent). The drag force $D$ (N), lift force $L$ (N), and aerodynamic pitching moment $M$ (N m) are given by $D = \bar{q}SC_D(x_2, u_{elev}, \hat{q})$, $L=\bar{q}SC_L(x_2,u_{elev},\hat{q})$, and $M = \bar{q}S\bar{c}C_m(x_2, u_{elev}, \hat{q})$,where $\bar{q}:= \frac{1}{2}\rho x_1^2$ is the dynamic pressure (N/m$^2$), $\hat{q}:=(\bar{c}/2x_1)x_3$ is the normalized pitch rate (unitless), $S$ and $\bar c$ are the surface area and mean aerodynamic chord (both in m). $C_D, C_L,$ and $C_m$ are aerodynamic coefficients computed from look-up tables provided by NASA \cite{Chakraborty:2011}.

A 4-state, 2-input, degree-7 polynomial model is obtained in \cite{Chakraborty:2011} by replacing all nonpolynomial terms in (\ref{eq:systemGTM}) with their polynomial approximations. The following straight and level trim-condition is computed for this model: $x_{1,t} = 45$ m/s, $x_{2,t} = 0.04924$ rad, $x_{3,t} = 0$ rad/s, $x_{4,t} = 0.04924$ rad, with $u_{elev,t} = 0.04892$ rad, and $u_{th,t} = 14.33 \%$. 
A 4-state, degree-3, single-input polynomial longitudinal model is extracted from the 4-state, 2-input, degree-7 polynomial model by holding $u_{th}$ at its trim value, and retaining terms up to degree-3. This degree-3 polynomial model is used for the following synthesis.

The disturbance $w$ is the perturbation to the angle of attack caused by a change in wind direction, i.e. the force generated on the aircraft is due to wind coming at an angle $(x_2 +w)$. Denote the nominal GTM system as $F(x,u) := f(x) + g(x)u$; then the disturbed system is given as
\begingroup
\allowdisplaybreaks
\begin{align}
\dot{x} & = F(x,u) + \frac{dF(x,u)}{d x_2}w \nonumber \\
& = f(x) + \frac{f(x)}{d x_2}w + (g(x)+\frac{dg(x)}{d x_2}w)u.
\end{align}
\endgroup
The disturbance $w$ is assumed to have both $\mathcal{L}_2$ and $\mathcal{L}_{\infty}$ bounds: $R := 0.1$ rad, $\int_{0}^t w^T(\tau) w(\tau) d\tau \leq R^2 q(t) := R^2t^2/T^2$, for all $t \in [0, 3 \ \text{sec}]$ and $\norm{w(t)}_2 \leq \overline{w} := 0.141$ rad. Set the time horizon [$0, 3$ sec], $\epsilon = 1\times 10^{-4}$, the control constraint $u_{elev} \in [-10^{\circ}, 10^{\circ}]$, and $r_T(x) = (x-x_{eq})^T diag(1/4^2, 1/(\pi/30)^2, 1/(\pi/15)^2, 1/(\pi/30)^2) (x - x_{eq})- 1$, where the equilibrium point $x_{eq} := [x_{1,t}, x_{2,t}, x_{3,t}, x_{4,t}]^T$. 

In this section, we inner-approximate the BRS for three cases: without disturbance $w$ and $k$ is a function of $t,x$; with disturbance $w$ and $k$ is allowed to be a function of $t,x,w$; with disturbance $w$ but $k$ is a function of only $t,x$. Curves shown on the left side of Figure \ref{fig:GTMdisturb} are slices of sets with $x_1 = x_{1,t}$ and $x_4 = x_{4,t}$;  curves shown on the right side are slices of sets with $x_2 = x_{2,t}$ and $x_3 = x_{3,t}$. Notice that the volume of inner-approximations of BRS for the two cases with disturbance are smaller than without disturbance. Moreover, for the two cases with disturbance, the volume of inner-approximations for the case using $k(t,x)$ is smaller than the case using $k(t,x,w)$. 

\begin{figure}[h]
	\centering
	\includegraphics[width=0.65\textwidth]{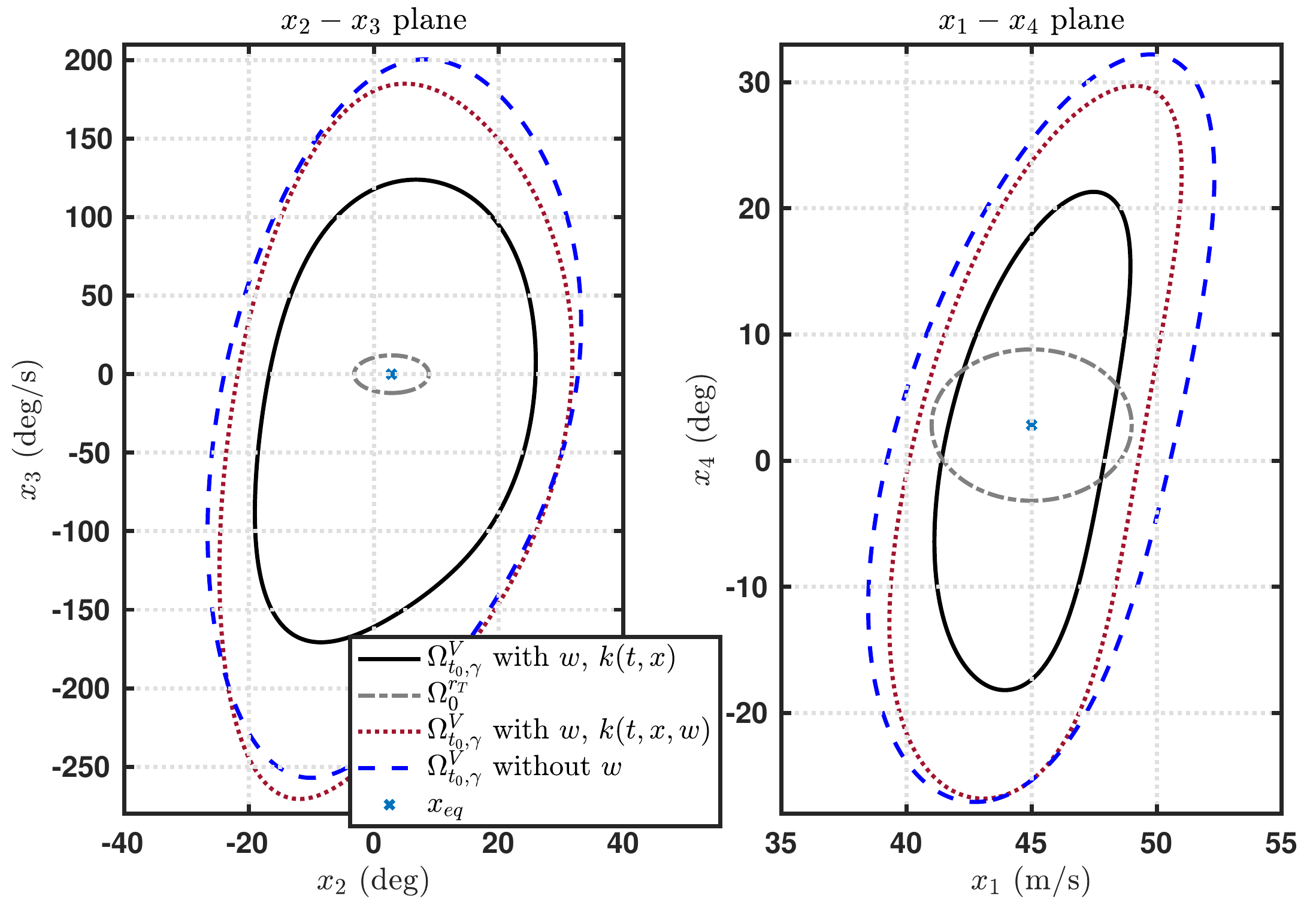}
	\caption{Inner-approximated BRS for GTM}
	\label{fig:GTMdisturb}    
\end{figure}

The simulation results of the polynomial model of GTM with the initial condition [$47$ m/s; $20$ rad; $70$ rad/s; $20$ rad] and a disturbance signal $w(t) = \frac{\sqrt{2t}R}{T} \eta(t)$, using both $k(t,x)$ and $k(t,x,w)$ are shown in Figure \ref{fig:GTMdisturbSim}, where the value of $\eta(t)$ is updated by the number drawn from the uniform distribution on the interval $(-1, 1)$ at 50 Hz, and holds at the updated value until the next update. As we can see in the figure, the trajectory for pitch rate $x_3$ with $k(t,x,w)$ reaches trim value faster than the one with $k(t,x)$, and the former is much smoother.

\begin{figure}[h]
	\centering
	\includegraphics[width=0.55\textwidth]{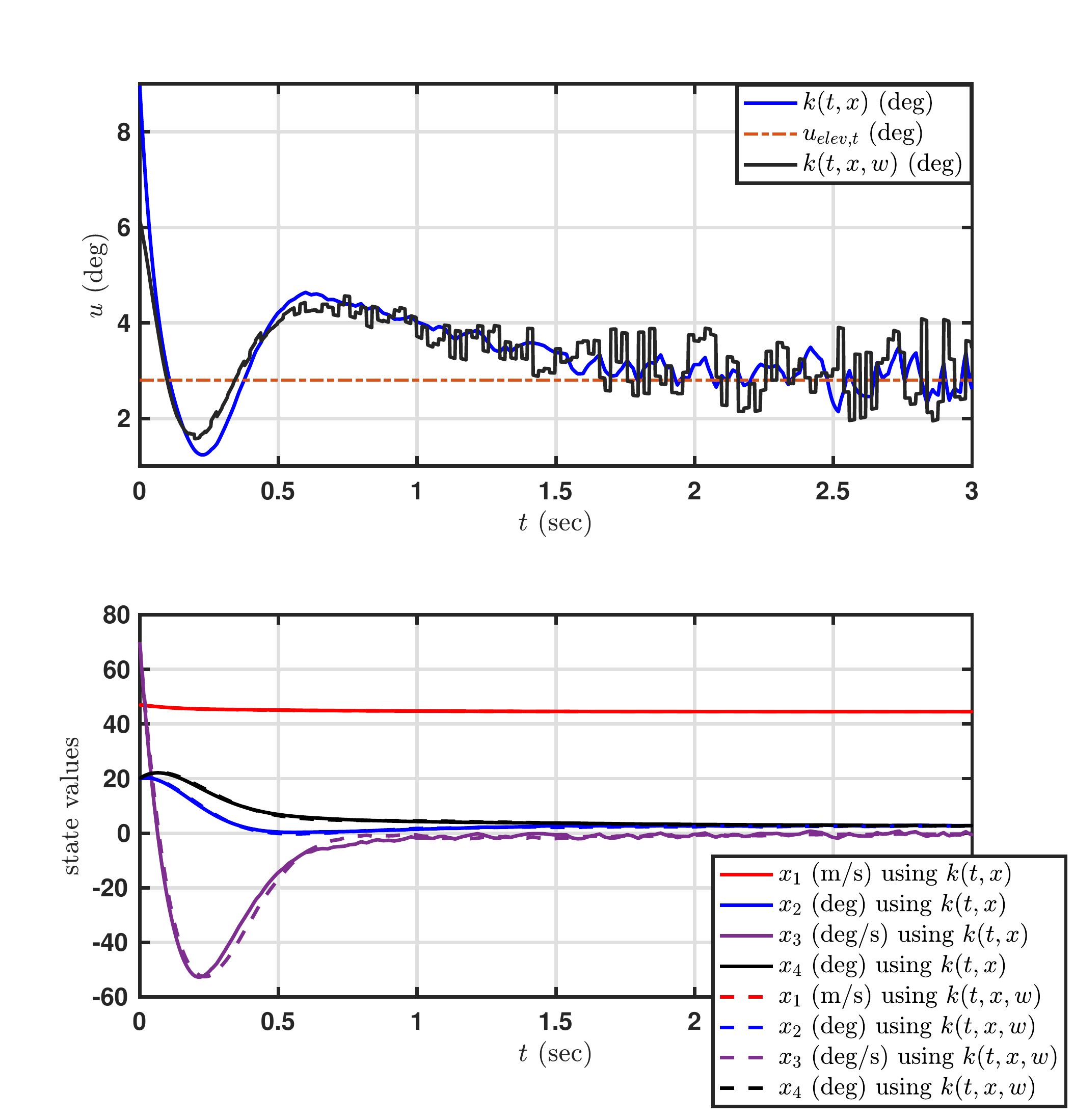}
	\caption{Simulations of GTM with disturbances $w$}
	\label{fig:GTMdisturbSim}    
\end{figure}

\subsection{Pursuer-evader Game} \label{ex_game}
Consider the reach-avoid example from \cite{Ian:05}. Assume that there are two players, the evader and the pursuer. Fix the evader at the origin and facing along the positive $x_1$ axis, so that the pursuer's relative location and heading are described by
\begin{align}
\bmat{\dot{x}_1 \\ \dot{x}_2 \\ \dot{x}_3} = \bmat{-v_e + v_p \cos(x_3) + u_e x_2 \\ v_p \sin(x_3) - u_e x_1 \\ u_p - u_e}, \label{eq:system7} 
\end{align}
where $x_1, x_2, x_3$ represent relative $x, y$ positions and heading angle; $u_e$ and $u_p$ are angular velocity inputs from the evader and pursuer; $v_e$ and $v_p$ are velocities of the evader and pursuer. 

Set the time horizon $[0, 2.6 \ \text{sec}]$, and $r_T(x) = x^T x - 1$. Velocities of two players are constant: $v_e = v_p = 1$, control input is $u_p(t) \in [-1 , 1]$. The goal for the pursuer is to find a robust control law for $u_p$ and an  inner-approximated BRS, so that no matter how the evader chooses its control input at each time instance, all the trajectories for system (\ref{eq:system7}) from the inner-approximated BRS will always be driven to the target set $\Omega_{0}^{r_T}$. This reachability problem is posed as a dynamic game in \cite{Ian:05}, whereas in this paper, the control input $u_e$ from the evader is regarded as the uncertain parameter with a given $\mathcal{L}_{\infty}$ bound: $u_e(t) \in [ -0.5, 0.5]$. In this example,  \He{$\cos(x_3)$ is approximated by $(-0.4298x_3^2 + 1)$, and $\sin(x_3)$ is approximated by $(-0.1511x_3^3 + x_3)$, which are obtained by least square regression  for  $x_3 \in [-\frac{\pi}{2}, \frac{\pi}{2}]$. Polynomial dynamics of (\ref{eq:system7}) can be obtained by replacing $\cos(x_3)$ by $(-0.4298x_3^2 + 1 + \delta_{\cos})$, where accounting for the error between $\cos(x_3)$ and its polynomial approximation yields $\delta_{\cos}(t) \in [-0.05, 0.05]$ for $x_3 \in [-\frac{\pi}{2}, \frac{\pi}{2}]$. The error between $\sin(x_3)$ and its polynomial approximation is very small and it is neglected. Setting $\delta_{\cos}(t) = 0$, neglects the $\cos(x_3)$ error as well.}

\He{The results are computed for the two cases: $\delta_{\cos}(t) \in [-0.05, 0.05]$ or $\delta_{\cos}(t) = 0$. In Figure \ref{fig:ReachAvoid}, computed inner-approximations are shown with solid red and translucent brown, respectively. The computed storage function of the former case is used as the initial iterate $V^0$ for the latter. The target set is shown with the transparent black cylinder. We can see that when $\delta_{\cos}(t) \in [-0.05, 0.05]$, the BRS inner-approximation is smaller than when $\delta_{\cos}(t) = 0$, but robust against the error resulting from polynomial modelling.}

\begin{figure}[h]
	\centering
	\begin{subfigure}[b]{0.2\textwidth}
		\centering
		\includegraphics[width=1.85\textwidth]{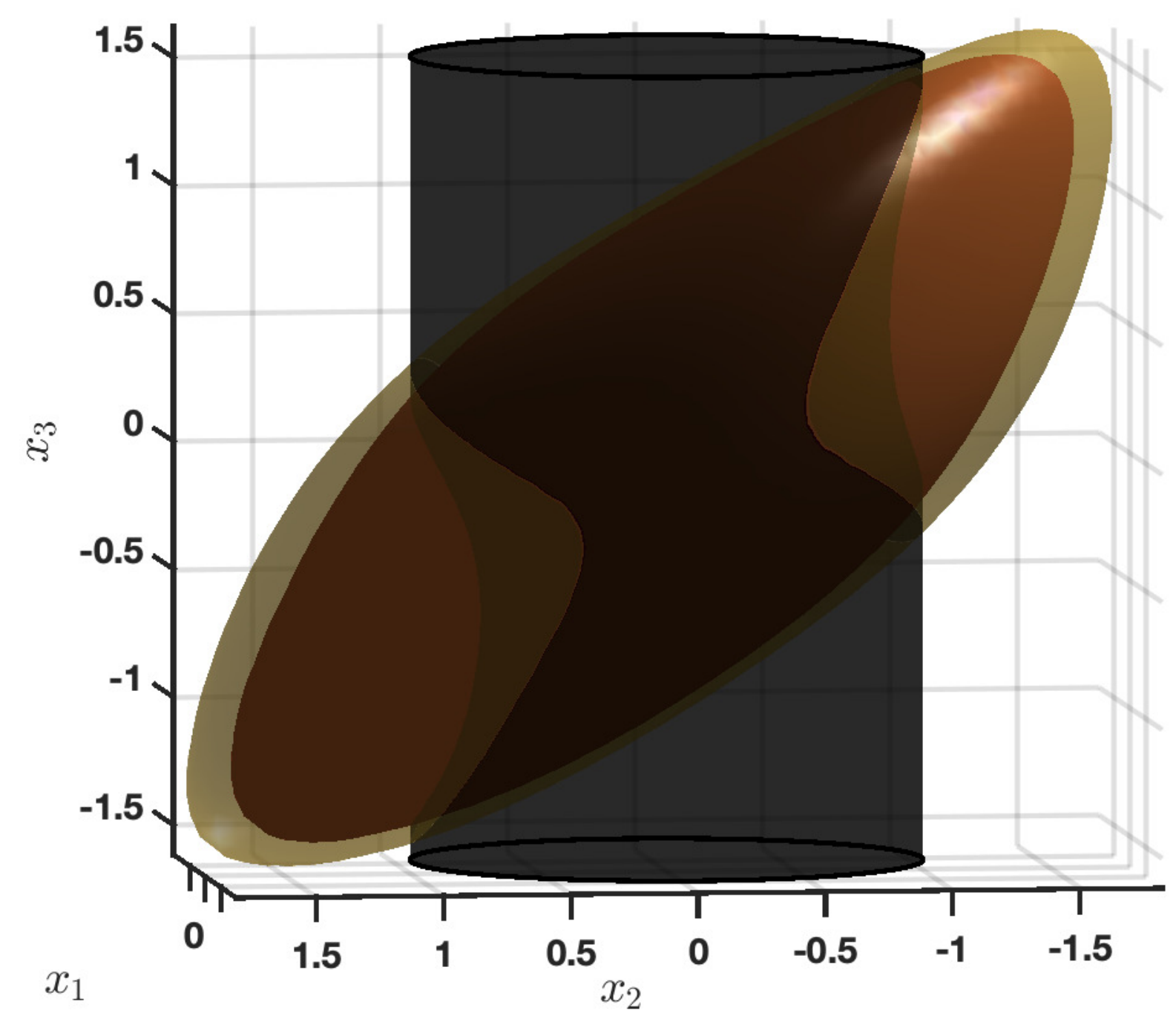}
		\caption{3D view}  
	\end{subfigure}
	\hspace{0.35\textwidth}%
	\begin{subfigure}[b]{0.2\textwidth}  
	    \centering
		\includegraphics[width=1.1\textwidth]{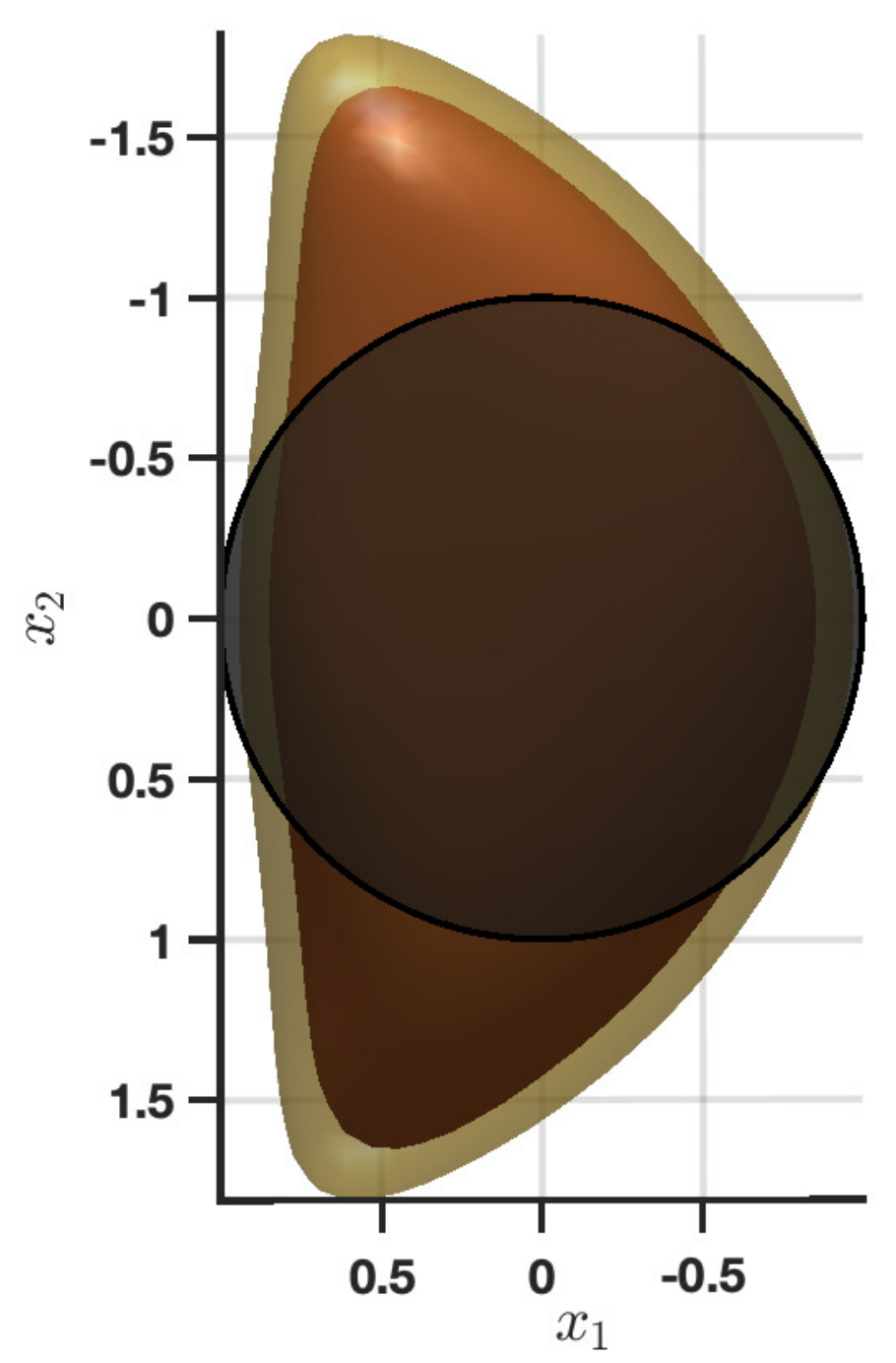}
		\caption{Top view}
	\end{subfigure}
	\caption{Inner-approximated BRS for the pursuer-evader game}
	\label{fig:ReachAvoid}    
\end{figure}

\section{Conclusions}
We proposed a method for synthesizing controllers for nonlinear systems with polynomial vector fields. The synthesis process yields \He{a state-feedback control law, and} a reachability storage function that characterizes an inner-approximation to the BRS \He{for a given target tube}. An iterative algorithm to construct them is derived based on SOS programming and the S-procedure. The synthesis framework is also extended to uncertain systems with $\mathcal{L}_{\infty}$ parametric uncertainties and $\mathcal{L}_2$ disturbances. This method is applied to several practical robotics and aircraft models. \He{Currently, the computational complexity of our method limits it to systems of modest size, with fewer than ten state variables.}

\section*{Acknowledgements}
This work was funded in part by the ONR grant N00014-18-1-2209.

\bibliography{reference.bib} 
\bibliographystyle{IEEEtran}

\end{document}